\renewcommand{\qed}{\rule{1ex}{1ex}}
\newcommand{\delim}{\left.\right|}
\title{A Tight Lower Bound on Certificate Complexity in Terms of Block Sensitivity and Sensitivity\thanks{
This research has received funding from the EU Seventh Framework Programme (FP7/2007-2013) under projects QALGO (No. 600700) and RAQUEL (No. 323970)
and ERC Advanced Grant MQC. 
Part of this work was done while Andris Ambainis was
visiting Institute for Advanced Study, Princeton, supported by National Science 
Foundation under agreement No. DMS-1128155. Any opinions, findings and conclusions or recommendations expressed in this material are those of the author(s) and 
do not necessarily reflect the views of the National Science Foundation.}}
\author{Andris Ambainis \and Kri\v{s}j\=anis Pr\=usis}
\institute{Faculty of Computing, University of Latvia, Raina bulv. 19, R\=\i ga, LV-1586, Latvia}
\begin{document}

\maketitle

\begin{abstract}
Sensitivity, certificate complexity and block sensitivity are widely used Boolean function complexity measures. A longstanding open problem, proposed by Nisan and Szegedy \cite{NS}, is whether sensitivity and block sensitivity are polynomially related. Motivated by the constructions of functions which achieve the largest known separations, we study the relation between 1-certificate complexity and 0-sensitivity and 0-block sensitivity. 

Previously the best known lower bound was  $C_1(f)\geq \frac{bs_0(f)}{2 s_0(f)}$, achieved by Kenyon and Kutin \cite{KK}. We improve this to $C_1(f)\geq \frac{3 bs_0(f)}{2 s_0(f)}$. While this improvement is only by a constant factor, this is quite important, as it precludes achieving a superquadratic separation between $bs(f)$ and $s(f)$ by iterating functions which reach this bound. In addition, this bound is tight, as it matches the construction of Ambainis and Sun \cite{AS} up to an additive constant.
\end{abstract}

\section{Introduction}

Determining the biggest possible gap between the sensitivity $s(f)$ and block sensitivity $bs(f)$ of a Boolean function is a well-known open
problem in the complexity of Boolean functions. Even though this question has been known for over 20 years, there has been quite little progress on it.

The biggest known gap is $bs(f)=\Omega(s^2(f))$. This was first discovered by Rubinstein \cite{R}, who constructed a function $f$ with $bs(f)=\frac{s^2(f)}{2}$,
and then improved by Virza \cite{V} and Ambainis and Sun \cite{AS}. Currently, the best result is 
a function $f$ with $bs(f)=\frac{2}{3} s^2(f) - \frac{1}{3} s(f)$ \cite{AS}.
The best known upper bound is exponential: $bs(f)\leq s(f) 2^{s(f)-1}$ \cite{A+} which improves over an earlier exponential upper bound by Kenyon and Kutin \cite{KK}.

In this paper, we study a question motivated by the constructions of functions that achieve a separation between $s(f)$ and $bs(f)$. The question is as follows:
Let $s_z(f)$, $bs_z(f)$ and $C_{z}(f)$ be the maximum sensitivity, block sensitivity and certificate complexity achieved by $f$ on inputs $x$: $f(x)=z$. 
What is the best lower bound of $C_1(f)$ in terms of $s_0(f)$ and $bs_0(f)$?

The motivation for this question is as follows. Assume that we fix $s_0(f)$ to a relatively small value $m$ and fix $bs_0(f)$ to a substantially larger value $k$. 
We then minimize $C_1(f)$. We know that $s_1(f)\leq C_1(f)$ (because every sensitive bit has to be contained in a certificate). 
We have now constructed an example where both $s_0(f)$ and $s_1(f)$ are relatively small and $bs_0(f)$ large.
This may already achieve a separation between $bs_0(f)$ and $s(f)=\max(s_0(f), s_1(f))$ and, if $s_1(f)>s_0(f)$, we can improve this separation by composing the function with OR (as in \cite{AS}).

While this is just one way of achieving a gap between $s(f)$ and $bs(f)$, all the best separations between these two quantities can be cast into this
framework. Therefore, we think that it is interesting to explore the limits of this approach.

The previous results are as follows:
\begin{enumerate}
\item
Rubinstein's construction \cite{R} can be viewed as taking a function $f$ with $s_0(f)=1$, $bs_0(f)=k$ and $C_1(f)=2k$. A composition with OR yields \cite{AS}
$bs(f)=\frac{1}{2}s^2(f)$;
\item
Later work by Virza \cite{V} and Ambainis and Sun \cite{AS} improves this construction by constructing $f$ with $s_0(f)=1$, $bs_0(f)=k$ and 
$C_1(f)=\left\lfloor \frac{3k}{2} \right\rfloor + 1$. A composition with OR yields $bs(f)=\frac{2}{3} s^2(f) - \frac{1}{3} s(f)$;
\item
Ambainis and Sun \cite{AS} also show that, given $s_0(f)=1$ and $bs_0(f)=k$, the certificate complexity $C_1(f)=\left\lfloor \frac{3k}{2} \right\rfloor + 1$
is the smallest that can be achieved. This means that a better bound must either start with $f$ with $s_0(f)>1$ or use some other approach;
\item
For $s_0(f)=m$ and $bs_0(f)=k$, it is easy to modify the construction of Ambainis and Sun \cite{AS} to obtain
$C_1(f)=\left\lfloor \frac{3\lceil k/m\rceil}{2} \right\rfloor + 1$ but this does not result in a better separation between $bs(f)$ and $s(f)$;
\item
Kenyon and Kutin \cite{KK} have shown a lower bound of $C_1(f)\geq \frac{k}{2m}$. If this was achievable, this could result in a separation of 
$bs(f)=2 s^2(f)$.
\end{enumerate}
The gap between the construction $C_1(f) = \frac{3k}{2m}+O(1)$ and the lower bound of $C_1(f)\geq \frac{k}{2m}$ is only a constant factor
but the constant here is quite important. This gap corresponds to a difference between $bs(f)=(\frac{2}{3}+o(1)) s^2(f)$ and $bs(f)=2s^2(f)$,
and, if we achieved $bs(f)>s^2(f)$, iterating the function $f$ would yield an infinite sequence of functions with a 
superquadratic separation $bs(f)=s(f)^c$, where $c>2$.

In this paper, we show that for any $f$
\[ C_1(f) \geq \frac{3}{2} \frac{bs_0(f)}{s_0(f)} - \frac{1}{2} .\]
This matches the best construction up to an additive constant and shows that no further improvement can be achieved along the lines of \cite{R,V,AS}.
Our bound is shown by an intricate analysis of possible certificate structures for $f$. 

Since we now know that $bs_0(f) \leq \left(\frac{2}{3}+o(1)\right)C_1(f) s_0(f)$, it is tempting to conjecture that $bs_0(f) \leq \left(\frac{2}{3}+o(1)\right)s_1(f) s_0(f)$. 
If this was true, the existing separation between $bs(f)$ and $s(f)$ would be tight.

\section{Preliminaries}

Let $f: \{0,1\}^n \rightarrow \{0,1\}$ be a Boolean function on $n$ variables. The $i$-th variable of input $x$ is denoted by $x_i$. For an index set $S \subseteq [n]$, let $x^S$ be the input obtained from an input $x$ by flipping every bit $x_i$, $i \in S$. Let a \emph{$z$-input} be an input on which the function takes the value $z$, where $z \in \{0,1 \}$.

We briefly define the notions of sensitivity, block sensitivity and certificate complexity.
For more information on them and their relations to other
complexity measures (such as deterministic, probabilistic and quantum decision
tree complexities), we refer the reader to the surveys by Buhrman and de Wolf \cite{BW}
and Hatami et al. \cite{HKP}.

\begin{definition}
The \emph{sensitivity complexity} $s(f,x)$ of $f$ on an input $x$ is defined as $| \{ i \delim f(x) \neq f(x^{\{i\}})\} |$. The \emph{$z$-sensitivity} $s_z(f)$ of $f$, where $z \in \{0,1 \}$,  is defined as $\max \{s(f,x) \delim x \in \{0,1\}^n, f(x)=z\}$.  The \emph{sensitivity} $s(f)$ of $f$  is defined as $\max \{s_0(f),s_1(f)\}$.
\end{definition}

\begin{definition}
The \emph{block sensitivity} $bs(f,x)$ of $f$ on input $x$ is defined as the maximum number $b$ such that there are $b$ pairwise disjoint subsets $B_1, \ldots , B_b$ of $[n]$ for which $f(x) \neq f(x^{B_i})$. We call each $B_i$ a \emph{block}.  The \emph{$z$-block sensitivity} $bs_z(f)$ of $f$, where $z \in \{0,1 \}$,  is defined as $\max \{bs(f,x) \delim x \in \{0,1\}^n, f(x)=z\}$.  The \emph{block sensitivity} $bs(f)$ of $f$  is defined as $\max \{bs_0(f),bs_1(f)\}$.
\end{definition}

\begin{definition}
A \emph{certificate} $c$ of $f$ on input $x$ is defined as a partial assignment $c: S \rightarrow \{0,1\}, S \subseteq [n]$ of $x$ such that $f$ is constant on this restriction. If $f$ is always 0 on this restriction, the certificate is a \emph{0-certificate}. If $f$ is always 1, the certificate is a \emph{1-certificate}.
\end{definition}

We denote specific certificates as words with $*$ in the positions that the certificate does not assign. For example, $01\!*\!*\!*\!*$ denotes a certificate that assigns 0 to the first variable and 1 to the second variable.

We say that an input $x$ \emph{satisfies} a certificate $c$ if it matches the certificate in every assigned bit. 

The number of \emph{contradictions} between an input and a certificate or between two certificates is the number of positions where one of them assigns 1 and the other assigns 0. For example, there are two contradictions between $0010\!*\!*$ and $100\!*\!**$ (in the 1st position and the 3rd position).

The number of \emph{overlaps} between two certificates is the number of positions where both have assigned the same values. For example, there is one overlap between $001\!*\!**$ and $*0000$ (in the second position). We say that two certificates {\em overlap} if there is at least one overlap between them.

We say that a certificate remains \emph{valid} after fixing some input bits if none of the fixed bits contradicts the certificate's assignments.

\begin{definition}
The \emph{certificate complexity} $C(f,x)$ of $f$ on input $x$ is defined as the minimum length of a certificate that $x$ satisfies. The \emph{$z$-certificate complexity} $C_z(f)$ of $f$, where $z \in \{0,1 \}$,  is defined as $\max \{C(f,x) \delim x \in \{0,1\}^n, f(x)=z\}$.  The \emph{certificate complexity} $C(f)$ of $f$  is defined as $\max \{C_0(f),C_1(f)\}$.
\end{definition}

\section{Background}

We study the following question:

{\bf Question:} Assume that $s_0(g)=m$ and $bs_0(g)=k$. How small can we make $C_1(g)$?

{\bf Example 1.} Ambainis and Sun \cite{AS} consider the following construction.

They define $g_0(x_1, \ldots, x_{2k})=1$ if and only if 
$(x_1, \ldots, x_{2k})$ satisfies one of $k$ certificates 
$c_0, \ldots, c_{k-1}$ with 
$c_i$  ($i\in\{0, 1, \ldots, k-1\}$) requiring that
\begin{enumerate}
\item[(a)]
$x_{2i+1}=x_{2i+2}=1$;
\item[(b)]
$x_{2j+1}=0$ for $j\in\{0, \ldots, k-1\}$, $j\neq i$;
\item[(c)]
$x_{2j+2}=0$ for $j\in\{i+1, \ldots, i+\lfloor k/2\rfloor\}$ (with $i+1, \ldots, i+\lfloor k/2\rfloor$
taken $\bmod k$).
\end{enumerate}
Then, we have:
\begin{itemize} 
\item
$s_0(g_0)=1$ (it can be shown that, for every 0-input of $g_0$, there is at most one $c_i$ in which only one variable does not have the right value);
\item
$s_1(g_0)=C_1(g_0)=\lfloor 3k/2\rfloor+1$ (a 1-input that satisfies a certificate $c_i$ is
sensitive to changing any of the variables in $c_i$ and $c_i$ contains 
$\lfloor 3k/2\rfloor+1$ variables);
\item
$bs_0(g_0)=k$ (the 0-input $x_1=\cdots=x_{2k}=0$ is sensitive to changing any of the pairs $(x_{2i+1}, x_{2i+2})$ from $(0, 0)$ to $(1, 1)$).
\end{itemize}

This function can be composed with the OR-function to obtain the best known separation between $s(f)$ and $bs(f)$: $bs(f)=\frac{2}{3} s^2(f) - \frac{1}{3} s(f)$\cite{AS}. As long as $s_0(g)=1$, the construction is essentially optimal: any $g$ with $bs_0(g)=k$ must satisfy $C_1(g)\geq s_1(g) \geq \frac{3k}{2}-O(1)$.

In this paper, we explore the case when $s_0(g)>1$.
An easy modification of the construction from \cite{AS} gives

\begin{theorem}
\label{thm:easy}
There exists a function $g$ for which
$s_0(g)=m$, $bs_0(g)=k$ and 
$C_1(f)=\left\lfloor \frac{3\lceil k/m\rceil}{2} \right\rfloor + 1$.
\end{theorem}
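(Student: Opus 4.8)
The plan is to prove Theorem~\ref{thm:easy} by taking $g$ to be the $\orf$ of $m$ mutually independent copies of the Ambainis--Sun function $g_0$ of Example~1, with the size parameters of the copies chosen so that their $0$-block sensitivities sum to exactly $k$ while the largest equals $\lceil k/m\rceil$. Since $bs_0 \ge s_0$ holds for every Boolean function, we may assume $k \ge m$ (otherwise no function with these parameters exists at all). Write $k = qm + r$ with $q = \lfloor k/m\rfloor \ge 1$ and $0 \le r < m$, and take integers $k_1,\dots,k_m$ consisting of $r$ values equal to $q+1 = \lceil k/m\rceil$ and $m-r$ values equal to $q$, so that $\sum_j k_j = k$ and $\max_j k_j = \lceil k/m\rceil$. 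For each $j$ let $g_0^{(j)}$ be the function of Example~1 with parameter $k_j$, acting on its own private block $V_j$ of $2k_j$ variables, and set $g := \orf(g_0^{(1)},\dots,g_0^{(m)})$.

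Next I would verify the three measures ``blockwise'', using that the sets $V_j$ are pairwise disjoint. At any $0$-input of $g$ every copy evaluates to $0$, and flipping a variable in $V_j$ flips $g$ exactly when it flips $g_0^{(j)}$; hence the sensitivity of $g$ there equals $\sum_j s(g_0^{(j)},\cdot) \le \sum_j s_0(g_0^{(j)}) = m$, with equality attained by concatenating, for each $j$, a $0$-input of $g_0^{(j)}$ of sensitivity $1$. The same argument applied to disjoint blocks rather than single bits gives $bs_0(g) = \sum_j bs_0(g_0^{(j)}) = \sum_j k_j = k$: for the upper bound, assign each sensitive block of a $0$-input of $g$ to one copy it activates and note that the blocks assigned to copy $j$ are pairwise disjoint and sensitive for $g_0^{(j)}$; for the lower bound, use the global all-$0$ input, whose $k_j$ pair-blocks inside each $V_j$ are all disjoint and $g$-sensitive. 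For certificate complexity, observe that a $1$-certificate of an $\orf$ of functions on disjoint variable sets must restrict, on at least one $V_j$, to a $1$-certificate of $g_0^{(j)}$ (otherwise each $V_j$ admits a completion making $g_0^{(j)}$ zero, and the joint completion makes $g$ zero). Consequently, for a $1$-input $x$ the value $C(g,x)$ equals the smallest $1$-certificate length among the copies that $x$ makes $1$; maximizing over $x$ — take $x$ that makes only the copy $j^\star$ with $k_{j^\star} = \lceil k/m\rceil$ equal to $1$, at a $1$-input realizing $C_1(g_0^{(j^\star)})$, and all other copies $0$ — yields $C_1(g) = \max_j C_1(g_0^{(j)}) = \lfloor 3\lceil k/m\rceil/2\rfloor + 1$, invoking the properties $s_0(g_0)=1$, $bs_0(g_0)=k_j$, $s_1(g_0)=C_1(g_0)=\lfloor 3k_j/2\rfloor+1$ recorded in Example~1.

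The only step needing genuine care is the lower bound $C_1(g) \ge C_1(g_0^{(j^\star)})$, i.e.\ ruling out a short global $1$-certificate that ``cheats'' by spreading its assigned bits over several blocks. On the input $x$ above this cannot occur: any certificate satisfied by $x$ restricts on some $V_j$ to a $1$-certificate of $g_0^{(j)}$, but the restriction of $x$ to any $V_j$ with $j \ne j^\star$ is a $0$-input of $g_0^{(j)}$ and hence satisfies no $1$-certificate of $g_0^{(j)}$; therefore $j = j^\star$, and the certificate already has length at least $C_1(g_0^{(j^\star)})$ within $V_{j^\star}$ alone. Everything else is routine bookkeeping over the disjoint blocks together with the already-established properties of the Ambainis--Sun construction. (The ``$C_1(f)$'' appearing in the statement is evidently a misprint for $C_1(g)$.)
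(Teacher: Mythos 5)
Your proposal is correct and takes essentially the same route as the paper: realize $g$ as the OR of $m$ copies of the Ambainis--Sun function $g_0$ on disjoint variable sets, so that $s_0$ and $bs_0$ add across copies while $C_1$ is governed by a single copy. The only differences are that you treat the case $m \nmid k$ exactly by giving the copies sizes $\lceil k/m\rceil$ and $\lfloor k/m\rfloor$, and you verify the composition identities for $s_0$, $bs_0$ and $C_1$ directly (including the lower bound on $C_1$ via an input activating only the largest copy), whereas the paper assumes divisibility to simplify notation and cites \cite{AS} for these facts.
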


\begin{proof}
To simplify the notation, we assume that $k$ is divisible by $m$. Let $r=k/m$.

We consider a function $g(x_{m1}, \ldots, x_{m,2r})$ with variables $x_{i, j}$ (
$i\in\{1, \ldots, m\}$ and $j\in\{1, \ldots, 2r\}$) defined by
\begin{equation}
\label{eq:or} g(x_{11}, \ldots, x_{m,2r}) = \vee_{i=1}^m g_0(x_{i,1}, \ldots, x_{i,2r}) .
\end{equation}
Equivalently, $g(x_{11}, \ldots, x_{m,2r})=1$ if and only if at least one of 
the blocks $(x_{i,1}, \ldots, x_{i,2r})$ satisfies one of the certificates 
$c_{i,0} , \ldots , c_{i,r-1}$ that are defined similarly to $c_0 , \ldots , c_{k-1}$
in the definition of $g_0$. 

It is easy to see \cite{AS} that composing a function $g_0$ with OR gives
$s_0(g)=m\, s_0(g_0)$, $bs_0(g)=m\, bs_0(g_0)$ and
$C_1(g) = C_1(g_0)$, implying the theorem.
\qed
\end{proof}

While this function does not give a better separation between $s(f)$ and $bs(f)$, 
any improvement to Theorem \ref{thm:easy} could 
give a better separation between $s(f)$ and $bs(f)$ by using the same composition with OR
as in \cite{AS}.

On the other hand, Kenyon and Kutin \cite{KK} have shown that

\begin{theorem}
For any $f$ with $s_0(g)=m$ and $bs_0(g)=k$, we have $C_1(f)\geq \frac{k}{2m}$.
\end{theorem}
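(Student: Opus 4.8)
The plan is to exploit a 0-input $x$ with $bs(g,x)=k$ together with the fact that every 0-certificate and every 1-certificate must overlap, and control the sizes via $s_0(g)=m$. Concretely, fix $x$ with $f(x)=0$ and disjoint minimal sensitive blocks $B_1,\dots,B_k$, so that each $x^{B_i}$ is a 1-input. Pick one of these 1-inputs, say $x^{B_1}$, and let $c$ be a minimal 1-certificate that $x^{B_1}$ satisfies; then $|c|\le C_1(f)$, and I want to show $|c|\ge \frac{k}{2m}$. The key observation is that $c$ must "meet" every block $B_i$: if $c$ agreed with $x$ on all of $B_i$ (equivalently, $c$ assigned no position in $B_i$ a value contradicting $x$), then $c$ would remain valid after setting the variables of $x$ outside $B_i$ to their values in $x$, producing a 1-input that is identical to $x$ except possibly on $B_1$; iterating this over a subcollection of blocks that $c$ avoids lets one flip $x$ on those blocks simultaneously without changing the value, contradicting that the $B_i$ are sensitive blocks (or driving $f$ to the wrong value). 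So $c$ must contain at least one variable from "most" of the blocks — that is the first main step, and it is essentially the Kenyon–Kutin counting idea.

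Second, I would turn "$c$ touches block $B_i$" into a bound on how many blocks a single assigned variable of $c$ can account for. Here is where the sensitivity bound $s_0(g)=m$ enters: a variable position $j\in B_i$ assigned by $c$ is assigned the value $1-x_j$ (it contradicts $x$ there, since otherwise $c$ does not help kill the flip of $B_i$). Now consider the 1-input $y=x^{B_i}$; the positions in $B_i$ on which $c$ disagrees with $x$ are exactly positions where $c$ disagrees with $y$ as well only if... — more carefully, one argues that each flipped-by-$c$ coordinate can lie in at most two of the minimal blocks is false in general, so instead one bounds: the number of coordinates of $c$ that differ from $x$ is at most $|c|$, and each such coordinate lies in at most... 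Rather than push this naive route, the cleaner argument (following Kenyon–Kutin) is: among the $k$ blocks, at most $|c|$ of them can contain a coordinate on which $c$ and $x$ disagree; for the remaining $\ge k-|c|$ blocks, $c$ restricted to $B_i$ is consistent with $x$, and then one uses an averaging/sensitivity argument on the 1-input obtained by flipping such a block to conclude that $s_0$ or $s_1$ would be exceeded unless $k-|c|$ is small — precisely, $k-|c|\le (2m-1)|c|$ type inequality, giving $|c|\ge k/(2m)$.

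Let me state the heart of the averaging step, which is where the constant $2m$ comes from and which I expect to be the main obstacle. Start from $x$ ($f(x)=0$) and flip one block $B_i$ that $c$ does not touch; we get a 1-input $x^{B_i}$. Since $c$ is still consistent with $x^{B_i}$ on $B_i$ (as $c$ does not touch $B_i$) but $c$ certifies 1, we can consider the hybrid inputs between $x$ and $x^{B_i}$; sensitivity $s_0(g)\le m$ forces that flipping $B_i$ in bulk must pass through intermediate inputs, and by minimality of the block $B_i$ every coordinate of $B_i$ is individually not sensitive at $x$ yet collectively is — this is exactly the regime where Kenyon–Kutin show that a 1-certificate of a nearby 1-input must "use up" the blocks at a rate of at most $2m$ blocks per certificate variable. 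The delicate part is the bookkeeping that prevents double-counting blocks and that correctly attributes each certificate variable to at most $2s_0 = 2m$ sensitive blocks; once that combinatorial lemma is in place, summing over all $k$ blocks yields $k\le 2m\,|c|\le 2m\,C_1(f)$, i.e. $C_1(f)\ge \frac{k}{2m}=\frac{bs_0(f)}{2s_0(f)}$. I would isolate that attribution lemma as the technical core and prove it by a direct case analysis on whether a certificate coordinate agrees or disagrees with the base 0-input.
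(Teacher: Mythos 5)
First, note that the paper itself does not prove this statement: it is quoted from Kenyon and Kutin, and the paper's own contribution (Theorem 3) is proved by a different route, namely counting contradictions among all $k$ certificates $c_1,\dots,c_k$ in a weighted graph and then applying a pigeonhole argument to find \emph{some} large certificate. Your proposal therefore has to stand on its own, and as written it has two genuine gaps, one of which is fatal to the set-up. Your ``key observation'' that a minimal 1-certificate $c$ of $x^{B_1}$ must meet every block $B_i$ is false: if $c$ assigns nothing in $B_i$, the only consequence is that $x^{B_1\cup B_i}$ is also a 1-input, which contradicts nothing (the blocks are sensitive at $x$, and $f(x^{B_1 \cup B_i})=1$ is perfectly consistent with that). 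For example, for $f=(x_1\wedge x_2)\vee(x_3\wedge x_4)$ the minimal certificate $11{*}{*}$ of $1100$ avoids the block $\{3,4\}$ entirely. Worse, the quantitative goal you set --- that the minimal certificate of an \emph{arbitrarily chosen} $x^{B_1}$ has size at least $k/(2m)$ --- is simply not true: take $f=(y_1\wedge y_2)\vee g_0(x_1,\dots,x_{2k})$ with $g_0$ the Ambainis--Sun function on disjoint variables. Then $s_0(f)=2$ and $bs_0(f)=k+1$, but the block $\{y_1,y_2\}$ has the minimal 1-certificate $y_1=y_2=1$ of size $2$, far below $(k+1)/4$ for large $k$. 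So any correct proof must select the block (equivalently, the certificate) by a maximization or an averaging over all $k$ blocks; fixing one block at the outset, as you do, cannot work.

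The second gap is that the entire quantitative content --- your ``attribution lemma'' asserting that each assigned variable of a certificate can account for at most $2m$ sensitive blocks --- is never argued; it is only announced as ``the technical core'' to be proved later by case analysis, with the constant $2m$ appearing by fiat. Since the first observation is false and the per-block claim is false, there is no surviving argument connecting $|c|$ to $k$ and $m$. A workable repair would have to look more like the paper's proof of its stronger bound: associate to every block $B_i$ a minimal certificate $c_i$ containing $B_i$ as 1s, observe that all 0s of these certificates lie in other blocks so that contradictions between certificates can be counted globally, show that a certificate with too few contradictions forces extra 0-sensitive bits (this is where $s_0(f)=m$ enters), and only then extract a single large certificate by pigeonhole over all $k$ of them.
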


\section{Separation between $C_1(f)$ and $bs_0(f)$}

In this paper, we show that the example of Theorem \ref{thm:easy} is
optimal.

\begin{theorem} \label {theorem:result}
 For any Boolean function $f$ the following inequality holds:
\begin{equation} \label {equation:result}
 C_1(f) \geq \frac{3}{2} \frac{bs_0(f)} {s_0(f)} - \frac{1}{2}.
\end{equation}
\end{theorem}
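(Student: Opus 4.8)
The plan is to fix a 0-input $x$ that witnesses the block sensitivity, so that there are $b = bs_0(f)$ pairwise disjoint minimal blocks $B_1,\dots,B_b$ with $f(x^{B_i})=1$, and to fix for each $i$ a 1-certificate $c_i$ satisfied by $x^{B_i}$. Since $B_i$ is minimal, every bit of $B_i$ is assigned by $c_i$ (otherwise we could shrink $B_i$), so $c_i$ agrees with $x$ outside $B_i$ and disagrees with $x$ exactly on $B_i$. Write $b_i = |B_i|$; the quantity we want to lower bound is $\max_i |c_i| \ge \max_i b_i$, but more importantly the certificate lengths interact through their overlaps. The first key observation is a contradiction/overlap lemma: two distinct 1-certificates $c_i$, $c_j$ that are both "anchored" at the same $x$ in this way cannot contradict each other on more than... actually they must contradict somewhere (else $x$ restricted suitably would be forced to $1$ while $f(x)=0$), and the number of such contradictions is controlled by $s_0(f)$. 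Concretely, if $c_i$ and $c_j$ overlap with few contradictions, one can flip $x$ on a small set to make both certificates simultaneously consistent, producing a 0-input with many sensitive bits; bounding that sensitivity by $m = s_0(f)$ forces each pair $c_i,c_j$ to have a controlled contradiction pattern on the blocks.

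Next I would set up the counting. For each ordered pair $(i,j)$, consider how $c_j$ interacts with the block $B_i$: since $c_i$ forces $x^{B_i}$ to value $1$ and $x$ itself has value $0$, flipping only the coordinates of $B_i$ on which $c_j$ and $c_i$ agree... the idea is that $c_j$ must "disrupt" block $B_i$, i.e. $c_j$ must assign some coordinate of $B_i$ in a way that, together with $x$, still isn't enough — leading to the estimate that each $B_i$ is hit by the certificates of the other blocks enough times that $\sum_i b_i$, or rather a weighted version, is large relative to $m$. The crucial combinatorial step, following Kenyon–Kutin's style but tightened, is to show that if all certificates were short — say $|c_i| \le t$ for all $i$ — then one could partition or group the blocks so that within a group the certificates are pairwise "compatible enough" to flip simultaneously, again yielding a 0-input of sensitivity exceeding $m$ unless $t \ge \tfrac{3}{2}\cdot\tfrac{b}{m} - \tfrac12$. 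The factor $\tfrac32$ (versus the trivial $\tfrac12$) comes from a more careful accounting: each contradiction between $c_i$ and $c_j$ that lies inside $B_i$ is "paid for" once, but a coordinate in $B_i$ on which $c_j$ agrees with $c_i$ (an overlap) is essentially free and lets us merge, so the extremal certificate structure is the one where half the length of each $c_i$ lies in its own block ($b_i$ coordinates) and the other half, of size $\approx b_i/2$, reaches into neighbouring blocks — exactly mirroring clauses (a)–(c) of Example 1, which is why the bound is tight.

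I would organise the write-up as: (1) the anchoring setup and the minimal-block observation; (2) a lemma bounding, in terms of $m$, how many of the blocks a single certificate $c_j$ can be made compatible with after a bounded flip of $x$ — this is where sensitivity enters; (3) a lemma that a set of pairwise-compatible certificates can be flipped to simultaneously yield sensitive directions at $x$, so such a set has size $\le m$; (4) the main counting that combines the per-certificate length $t$, the block sizes, and the overlap bookkeeping to conclude $t \ge \tfrac32\,\tfrac{k}{m} - \tfrac12$. The main obstacle, and the part I expect to be genuinely intricate rather than routine, is step (4): controlling the overlaps between the $c_i$ carefully enough to get the constant $\tfrac32$ rather than some weaker constant. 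Naively bounding contradictions gives only Kenyon–Kutin's $\tfrac{1}{2}$; extracting the extra factor of $3$ requires arguing that overlaps between certificates are "expensive" in a second, independent way — each overlap either wastes certificate length or creates an additional sensitive direction — so that a short certificate cannot simultaneously be compact and reach into many other blocks. Making this tradeoff quantitative, presumably via a clever weighting or an amortized charging argument over all pairs $(B_i, c_j)$, is the crux of the proof.
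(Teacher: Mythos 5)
Your setup coincides with the paper's: normalize to a 0-input witnessing $bs_0(f)$, take minimal blocks $B_1,\dots,B_k$, and for each block a minimal 1-certificate $c_i$ that assigns 1 exactly on $B_i$ and 0 elsewhere. From there, however, the proposal is a plan rather than a proof, and it leaves open precisely the step that carries the content of the theorem. You flag step (4) --- extracting the constant $\tfrac{3}{2}$ via ``a clever weighting or an amortized charging argument'' --- as the unresolved crux; but without it the argument yields nothing beyond Kenyon--Kutin. In the paper this step is done by forming the complete graph on $c_1,\dots,c_k$ with edge weights equal to pairwise contradictions and proving by induction that every induced subgraph of order $m$ has weight at least $\tfrac{3}{2}\frac{m^2}{s_0(f)}-\tfrac{3}{2}m$; the induction works because a minimum-weight induced subgraph $H$ of order $s_0(f)$ must receive weight at least $3$ from every certificate outside it, and the theorem then follows by pigeonhole, since each 0 written in a certificate accounts for exactly one contradiction (every position carries a 1 in exactly one certificate, by the normalization that every bit lies in some block), so some certificate contains at least $\tfrac{3}{2}\frac{k}{s_0(f)}-\tfrac{3}{2}$ zeros plus at least one 1. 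Your heuristic that the extremal certificate puts half its length in its own block mirrors the tight example of Ambainis and Sun but is not a proof step.

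Your step (3) is also both imprecise and much harder than you suggest. The condition that makes the argument work is not pairwise compatibility but that each certificate in the set has at most $2$ contradictions with \emph{all} the others combined; and the many sensitive bits cannot be exhibited at $x$ itself --- the all-0 input may have very small sensitivity even when $bs_0$ is large --- so one must construct a different 0-input by fixing bits of the certificates component by component and repeatedly passing to a largest non-constant subfunction. The paper devotes a full section to exactly this claim (its Lemma on functions whose number of 1-certificates equals $s_0$), with a reduction eliminating overlaps (two certificates assigning 0 where a third assigns 1) followed by a case analysis over the components of the contradiction graph: isolated certificates, paths, pairs with two contradictions, and cycles. Neither this machinery nor any substitute for it appears in your proposal, so as written the argument does not establish the bound $C_1(f)\geq \tfrac{3}{2}\frac{bs_0(f)}{s_0(f)}-\tfrac{1}{2}$.
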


\begin{proof}
Without loss of generality, we can assume that the maximum $bs_0$ is achieved on the all-0 input denoted by 0. Let $B_1,...,B_k$  be the sensitive blocks, where $k=bs_0(f)$.
Also, we can w.l.o.g. assume that these blocks are minimal and that every bit belongs to a block. (Otherwise, we can fix the remaining bits to 0.
This can only decrease $s_0$ and $C_1$, strengthening the result.)

Each block $B_i$ has a corresponding minimal 1-certificate $c_i$ such that the word $(\{0\}^n)^{B_i}$ satisfies this certificate. Each of these certificates has a 1 in every position of the corresponding block (otherwise the block would not be minimal) and any number of 0's in other blocks.

We construct a complete weighted graph $G$ whose vertices correspond to certificates $c_1$, $\ldots$, $c_k$.
Each edge has a weight that is equal to the number of contradictions between the two certificates the edge connects. 
{\em The weight of a graph} is just the sum of the weights of its edges. We will prove

\begin{lemma}
Let $w$ be the weight of an induced subgraph of $G$ of order $m$. Then
\begin{equation}
w \geq \frac{3}{2} \frac{m^2}{s_0(f)}-\frac{3}{2}m.
\end{equation}
\end{lemma}

\begin{proof}
The proof is by induction.
As a basis we take induced subgraphs of order $m \leq s_0(f)$. In this case,

\begin{equation}
\frac{3}{2} \frac{m^2}{s_0(f)}-\frac{3}{2}m \leq 0
\end{equation}
and $w \geq 0$ is always true, as the number of contradictions between two certificates cannot be negative.

Let $m > s_0(f)$.
We assume that the relation holds for every induced subgraph of order $< m$. Let $G'$ be an induced subgraph of order $m$. 
Let $H \subset G'$ be its induced subgraph of order $s_0(f)$ with the smallest total weight.

\begin{lemma} \label{lemma:subgraphs}
For any certificate $c_i \in G' \setminus H$  in $G'$ not belonging to this subgraph $H$ the weight of the edges connecting $c_i$ to $H$ is $\geq 3$.
\end {lemma}
\begin {proof}
Let $t$ be the total weight of the edges in $H$. Let us assume that there exists a certificate $c_j \notin H$ such that the weight of the edges connecting $c_j$ to $H$ is $\leq 2$. Let $H'$ be the induced subgraph $H \cup \{c_j\}$. Then the weight of $H'$ must be $\leq t+2$.

We define the weight of a certificate $c_i \in H'$ as the sum of the weights of all edges of $H'$ that involve vertex $c_i$. 
If there exists a certificate $c_i \in H'$ such that its weight in $H'$ is $\geq 3$, then the weight of $H' \setminus \{c_i\}$ would be $<t$, which is a contradiction, as $H$ was taken to be the induced subgraph of order $s_0(f)$ with the smallest weight. Therefore the weight of every certificate in $H'$ is at most 2.

In the next section, we show
\begin{lemma} \label{lemma:overlaps}
Let $f$ be a Boolean function for which the following properties hold: $f(\{0\}^n)=0$ and $f$ has such $k$ minimal 1-certificates that each has at most 2 contradictions with all the others together. Furthermore, for each input position, exactly one of these certificates assigns the value 1. Then, $s_0(f) \geq k$.
\end{lemma}


This lemma implies that $s_0(f)\geq |H'|$ which is in contradiction with $|H'|=s_0(f)+1$. Therefore no such $c_j$ exists.
\qed
\end{proof}

We now examine the graph $G' \setminus H$. It consists of $m-s_0(f)$ certificates and by the inductive assumption has a weight of at least 
\begin{equation}
\frac{3}{2} \frac{(m-s_0(f))^2}{s_0(f)} - \frac{3}{2} (m-s_0(f)).
\end{equation}
But there are at least $3 (m-s_0(f))$ contradictions between $H$ and $G' \setminus H$, thus the total weight of $G'$ is at least
\begin{align}
&\frac{3}{2} \frac{(m-s_0(f))^2}{s_0(f)} - \frac{3}{2} (m-s_0(f)) + 3(m-s_0(f))  \\
=~&\frac{3}{2} \frac{m^2-2 m s_0(f) + s_0(f)^2 } {s_0(f)} + \frac{3}{2} m - \frac{3}{2} s_0(f) \\
=~&\frac{3}{2} \frac{m^2} {s_0(f)} - 3 m + \frac{3}{2} s_0(f) + \frac{3}{2} m - \frac{3}{2} s_0(f) \\
=~&\frac{3}{2} \frac{m^2}{s_0(f)} - \frac{3}{2} m .
\end{align}
This completes the induction step. 
\qed
\end{proof}

By taking the whole of $G$ as $G'$, we find a lower bound on the total number of contradictions in the graph:
\begin{equation}
\frac{3}{2} \frac{k^2}{s_0(f)} - \frac{3}{2} k .
\end{equation}
Each contradiction requires one 0 in one of the certificates and each 0 contributes to exactly one contradiction (since for each position exactly one of $c_i$ assigns  a 1). Therefore, by the pigeonhole principle, there exists a certificate with at least 
\begin{equation}
\frac{3}{2} \frac{k}{s_0(f)} - \frac{3}{2}
\end{equation}
zeroes. As each certificate contains at least one 1, we get a lower bound on the size of one of these certificates and $C_1(f)$:
\begin{equation}
C_1(f) \geq \frac{3}{2} \frac{bs_0(f)}{s_0(f)} - \frac{1}{2}.
\end{equation}
\qed
\end{proof}

\section{Functions with $s_0(f)$ Equal to Number of 1-certificates}

In this section we prove Lemma \ref{lemma:overlaps}.

\subsection{General Case: Functions with Overlaps} \label{section:overlaps}

Let $c_1, \ldots, c_k$ be the $k$ certificates.
We start by reducing the general case of Lemma \ref{lemma:overlaps} to the case when there are no overlaps between any of $c_1, \ldots, c_k$.


Note that certificate overlaps can only occur when two certificates assign 0 to the same position. Then a third certificate assigns 1 to that position. This produces 2 contradictions for the third certificate, therefore it has no further overlaps or contradictions. For example, here we have this situation in the 3rd position (with the first three certificates) and in the 6th position (with the last three certificates):

\begin{equation}
\begin{pmatrix}
1&1&0&*&*&*&*&*&*&*\\
*&*&1&*&*&*&*&*&*&*\\
*&*&0&1&1&0&*&*&*&*\\
*&*&*&*&*&1&1&1&*&*\\
0&*&*&*&*&0&*&*&1&1
\end{pmatrix}.
\end{equation}

Let $t$ be the total number of such overlaps. Let $D$ be the set of certificates assigning 1 to positions with overlaps, $|D|=t$.
We fix the position of every overlap to 0. Since the remaining function contains the word $\{0\}^n$, it is not identically 1. Every certificate not in $D$ is still a valid 1-certificate, as they assigned either nothing or 0 to the fixed positions. If they are no longer minimal, we can minimize them, which cannot produce any new overlaps or contradictions.

The certificates in $D$ are, however, no longer valid. Let us examine one such certificate $c \in D$. We denote the set of positions assigned to by $c$ by $S$. 
Let $i$ be the position in $S$ that is now fixed to 0.
We claim that certificate $c$ assigns value 1 to all $|S|$ positions in $c$. 
(If it assigned 0 to some position, there would be at least 3 contradictions between
$c$ and other certificates: two in position $i$ and one in position where $c$ assigns 0.)

If $|S|=1$, then the remaining function is always sensitive to $i$ on 0-inputs, as flipping $x_i$ results in an input satisfying $c$. 

If $|S|>1$, we examine the $2^{|S|-1}$ subfunctions obtainable by fixing the remaining positions of $S$. We fix these positions to the subfunction that is not identically 1 with the highest number of bits fixed to 1, we will call this the \emph{largest non-constant subfunction}. If it fixes 1 in every position, it is sensitive to $i$ on 0-inputs, as flipping it produces a word which satisfies $c$. Otherwise it is sensitive on 0-inputs to every other bit fixed to 0 in $S$  besides $i$, as flipping them would produce a word from a subfunction with a higher amount of bits fixed to 1. But that subfunction is identically 1 or we would have fixed it instead.

In either case we obtain at least one sensitive bit in $S$ on 0-inputs in the remaining function. Furthermore, every certificate not in $D$ is still valid, if not minimal. But we can safely minimize them again.

We can repeat this procedure for every certificate in $D$. The resulting function is not always 1 and, on every 0-input, it has at least $t$ sensitive bits among the bits that
we fixed. Furthermore, we still have $k-t$ non-overlapping valid minimal 1-certificates with no more than 2 contradictions each. In the next section, we show that this
implies that it has 0-sensitivity of at least $k-t$ (Lemma \ref{lemma:graph}). 
Therefore, the original function has a 0-sensitivity of at least $k-t+t=k$.

\subsection{Functions with No Overlaps}

\begin{lemma} \label{lemma:graph}
Let $f$ be a Boolean function, such that $f$ is not always 1 and $f$ has such $k$ non-overlapping minimal 1-certificates that each has at most 2 contradictions with all the others together. Then, $s_0(f) \geq k$.
\end{lemma}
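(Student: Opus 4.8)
The plan is to build, for each of the $k$ non-overlapping minimal $1$-certificates $c_1,\dots,c_k$, a single $0$-input $x$ (I will use the all-$0$ input, or a close variant) together with $k$ pairwise disjoint sensitive blocks — in fact I would aim for something stronger: I would show that the all-$0$ input is sensitive to $k$ \emph{individual} bits, one chosen from each $c_i$. Since each $c_i$ is a minimal $1$-certificate containing at least one variable, flipping the variables of $c_i$ (from their all-$0$ values) to the values prescribed by $c_i$ turns $\{0\}^n$ into a $1$-input; the issue is that $c_i$ may assign $1$ to many positions, so this is a block, not a single bit. The key structural fact to exploit is the hypothesis that each $c_i$ has at most $2$ contradictions with all the others \emph{together}, combined with the no-overlap hypothesis: so two certificates either are entirely on disjoint sets of variables, or they conflict in at most $2$ positions and nowhere else do they share a variable.

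First I would set up a bipartite-style accounting. For certificate $c_i$, let $S_i$ be its set of assigned positions and $Z_i \subseteq S_i$ the positions where $c_i$ assigns $0$. Because there are no overlaps, the $Z_i$'s have a clean interaction: a position in $Z_i$ either is unassigned by every other $c_j$, or is assigned $1$ by exactly one $c_j$ (the ``$1$ against $0$'' contradiction), and each such event costs $c_i$ one contradiction and costs that $c_j$ two contradictions (since $c_j$ has a $1$ where $c_i$ has $0$ is one contradiction for $c_j$, but wait — it is symmetric, one contradiction each). I would re-derive the precise bookkeeping: $|Z_i|$ equals the number of contradictions $c_i$ has, hence $\sum_i |Z_i| = 2\cdot(\text{total contradictions}) \le 2\cdot(\text{sum over }i\text{ of }2)/\,?$ — more simply, each contradiction is counted once from each side, and each certificate is on the ``$\le 2$'' side, so the total number of contradictions is at most $k$, and $\sum_i|Z_i|\le 2k$... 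I would actually want the sharper per-certificate statement: $|Z_i|\le 2$ for every $i$. This follows directly: $|Z_i|$ is exactly the number of contradictions involving $c_i$, which is at most $2$ by hypothesis. So every certificate assigns $1$ everywhere on $S_i$ except on at most two positions.

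The heart of the argument is then a case analysis on the ``conflict graph'' structure, exactly mirroring the reduction already performed in Section \ref{section:overlaps} for the $|S|=1$ versus $|S|>1$ cases: for each $c_i$ I want to pin down one sensitive bit of the all-$0$ input, using the ``largest non-constant subfunction'' trick. Concretely, restrict to the subfunction where I fix all the $1$-positions of $c_i$ (there are $|S_i|-|Z_i|$ of them) to $1$ and examine the $2^{|Z_i|}$ further restrictions; pick the largest non-constant one. Either it fixes $1$ everywhere, so flipping any remaining $0$-bit of $S_i$ reaches a $1$-input — giving sensitivity at each bit of $Z_i$; or it already is forced, and sensitivity comes from a bit of $Z_i$ whose flip escapes into an identically-$1$ subfunction. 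This yields, for each $i$, a sensitive bit $b_i \in S_i$. The main obstacle — and the step I expect to be genuinely delicate — is showing these $b_i$ can be chosen to lie on a \emph{common} $0$-input and to be distinct (equivalently, disjoint one-element blocks), since fixing bits to set up the sensitive bit for $c_i$ may destroy the validity of $c_j$ or collide with $c_j$'s positions; here I would lean on the no-overlap property (so $S_i\cap S_j$ is empty unless $c_i,c_j$ contradict, and then only on $\le 2$ shared positions which are exactly the contradiction positions) to argue the fixings for different certificates can be made compatible, processing certificates in an order that respects their conflict structure much as $D$ was processed in Section \ref{section:overlaps}. Assembling the $k$ distinct sensitive bits on one $0$-input gives $s_0(f)\ge k$. \qed
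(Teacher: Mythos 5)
Your proposal contains the right local ingredients (one sensitive bit per certificate, obtained by fixing a certificate's positions according to it except for one ``escape'' bit, plus the largest-non-constant-subfunction trick), but it leaves unresolved exactly the point where the real work lies: making the $k$ per-certificate fixings mutually compatible on a \emph{single} 0-input. When $c_i$ and $c_j$ contradict at a position, you cannot fix that position according to both, and your plan to ``process certificates in an order that respects their conflict structure'' is precisely what has to be designed and verified. The paper does this by decomposing the conflict graph into its connected components (isolated certificates, pairs with two contradictions, paths, and cycles) and handling each by an induction that restricts the function one component at a time: for a path you fix the contradiction bit according to the next certificate and argue the restriction is not identically 1 by minimality of $c_i$; for a pair with two contradictions and, crucially, for a cycle there is a wrap-around problem (the first certificate of the cycle has its contradiction with $c_2$ fixed against it), which is resolved by a separate argument on the unfixed positions of $c_1$. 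Your sketch does not identify the cycle case or how its last sensitive bit is recovered, nor does it explain why a common 0-input consistent with all the fixings exists at all --- in the paper this comes from the inductive restriction machinery (at each step the restricted function is shown not to be identically 1, remaining certificates are re-minimized, and the accumulated partial assignment keeps every earlier sensitive bit sensitive on all 0-inputs of the restriction).

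Two smaller but genuine inaccuracies: the target input cannot be the all-0 input --- the construction necessarily fixes many bits to 1, and under the lemma's hypotheses the all-0 input need not even be a 0-input (the lemma only assumes $f$ is not identically 1); and your claim that $|Z_i|\le 2$ because ``$|Z_i|$ is exactly the number of contradictions involving $c_i$'' does not follow from the hypotheses of this lemma, since a 0-position of $c_i$ may be unassigned by every other certificate (the property that every position carries a 1 of exactly one certificate is a hypothesis of the other lemma, Lemma \ref{lemma:overlaps}, not of this one). Neither of these is fatal to the overall strategy, but as written the proposal stops short of a proof at its acknowledged ``delicate step,'' which is the heart of the paper's argument.
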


\begin{proof}
To prove this lemma, we consider the weighted graph $G$ on these $k$ certificates where the weight of an edge in this graph is the number of contradictions between the two certificates the edge connects.

We examine the connected components in this graph, not counting edges with weight 0. There can be only 4 kinds of components -- individual certificates, two certificates with 2 contradictions between them, paths of 2 or more certificates with 1 contradiction between every two subsequent certificates in the path
and cycles of 3 or more certificates with 1 contradiction between every two subsequent certificates in the cycle. As there are no overlaps between the certificates, each position is assigned to by certificates from at most one component.

We will now prove by induction on $k$ that we can obtain a 0-input with as many sensitive bits in each component as there are certificates in it.

As a basis we take $k=0$. Since $f$ is not always 1, $s_0(f)$ is defined, but obviously $s_0(f) \geq 0$. 

Then we look at each graph component type separately.

\subsubsection{Individual Certificates.}

We first examine individual certificates.  Let us denote the examined certificate by $c$ and the set of positions it assigns by $S$. 
We fix all bits of $S$ except for one according to $c$ and we fix the remaining bit of $S$ opposite to $c$. The remaining function cannot be always 1, as otherwise the last bit in $S$ would not be necessary in $c$, but $c$ is minimal. Therefore on 0-inputs the remaining function is also sensitive to this last bit, as flipping it produces a word which satisfies $c$.

Afterwards the remaining certificates might no longer be minimal. In this case we can minimize them. This cannot produce any more contradictions and no certificate can disappear, as the function is not always 1. Therefore the remaining function still satisfies the conditions of this lemma and has $k-1$ minimal 1-certificates, with each certificate having at most 2 contradictions with the others.

Then by induction the remaining function has a 0-sensitivity of $k-1$. Together with the sensitive bit among the fixed ones, we obtain $s_0(f) \geq k$.

\subsubsection{Certificate Paths.}

We can similarly reduce certificate paths. A certificate path is a structure where each certificate has 1 contradiction with the next one and there are no other contradictions.
For example, here is an example of a path of length 3:

\begin{equation}
\begin{pmatrix}
&&i&&&& \\
1&1&0&*&*&*&*\\
*&*&1&1&0&*&*\\
*&*&*&*&1&1&1
\end{pmatrix}.
\end{equation}

We note that every certificate in a path assigns at least 2 positions, otherwise its neighbours would not be minimal.

We then take a certificate $c$ at the start of a path, which is next to a certificate $d$. Let $S$ be the set of positions $c$ assigns. Let $i$ be the position where $c$ and $d$ contradict each other. 

We then fix every bit in $S$ but $i$ according to $c$, and we fix $i$ according to $d$. The remaining function cannot be always 1, as otherwise $i$ would not be necessary in $c$,
but $c$ is minimal. But on 0-inputs the remaining function is also sensitive to $i$ because flipping it produces a word which satisfies $c$.

We note that in the remaining function the rest of $d$ (not all of $d$ was fixed because $d$ assigns at least 2 positions) is still a valid certificate, since it only assigns one of the fixed bits and it was fixed according to $d$. Similarly to the first case we can minimize the remaining certificates and obtain a function with $k-1$ certificates satisfying the lemma conditions.

Then by induction the remaining function has a 0-sensitivity of $k-1$. Together with the sensitive bit $i$, we obtain $s_0(f) \geq k$.

\subsubsection{Two Certificates with Two Contradictions.}

Let us denote these 2 certificates as $c$ and $d$ and the two positions where they contradict as $i$ and $j$. 
For example, we can have 2 certificates like this:

\begin{equation}
\begin{pmatrix}
&&i&j& \\
1&1&1&0&*\\
*&*&0&1&1
\end{pmatrix}.
\end{equation}

Let $S$ be the set of positions $c$ assigns and $T$ be the set of positions $d$ assigns.
We then fix every bit in $S$ except $j$ according to $c$ but we fix $j$ according to $d$. 
The remaining function cannot be always 1 because, otherwise, $j$ would not be necessary in $c$, but $c$ is minimal. But on 0-inputs the remaining function is also sensitive to $j$, as flipping it produces a word which satisfies $c$.

If $|T|=2$, then on 0-inputs the remaining function is also sensitive to $i$ because flipping the $i^{\rm th}$ variable produces a word which satisfies $d$. 

If $|T|>2$, we examine the $2^{|T|-2}$ subfunctions obtainable by fixing the remaining positions of $T$. We can w.l.o.g. assume that $d$ assigns the value 1 to each of these. Similarly to section \ref{section:overlaps}, we find the largest non-constant subfunction among these -- the subfunction that is not identically 1 with the highest number of bits fixed to 1. Then on 0-inputs we obtain a sensitive bit either at $i$ if this subfunction fixes all these positions to 1 or at a fixed 0 otherwise.

Therefore we can always find at least one additional sensitive bit among $T$.

Again we can minimize the remaining certificates and obtain a function with $k-2$ certificates satisfying the conditions of the lemma. 

Then by induction the remaining function has a 0-sensitivity of $k-2$. Together with the two additional sensitive bits found, we obtain $s_0(f) \geq k$.

\subsubsection{Certificate Cycles.}
A certificate cycle is a sequence of at least 3 certificates where each certificate has 1 contradiction with the next one and the last one has 1 contradiction
with the first one.
For example, here is a cycle of length 5:
\begin{equation}
\begin{pmatrix}
 j_{5,1}& &j_{1,2}&j_{2,3}& &j_{3,4}&j_{4,5} & \\
1&1&0&*&*&*&*&*\\
*&*&1&0&*&*&*&*\\
*&*&*&1&1&1&*&*\\
*&*&*&*&*&0&0&*\\
0&*&*&*&*&*&1&1
\end{pmatrix}.
\end{equation}

Every certificate in a cycle assigns at least 2 positions, otherwise its neighbours in the cycle would overlap.
We denote the length of the cycle by $m$. Let $c_1, \ldots, c_m$ be the certificates in
this cycle, let $S_1, \ldots, S_m$ be the positions assigned by them, and let $j_{1,2}, \ldots, j_{m,1}$ be the positions where the certificates contradict.

We assign values to variables in $c_2, \ldots, c_m$ in the following way. We first assign values to variables in $S_2$ so that the variable $j_{2,3}$ contradicts $c_2$ and is assigned according to $c_3$, but all other variables are assigned according to $c_2$. 

We have the following properties. First, the remaining function cannot be always 1, as otherwise $j_{2,3}$ would not be necessary in $c_2$, but $c_2$ is minimal. Second, any 0-input that is consistent with the assignment that we made is sensitive to $j_{2,3}$ because flipping this position produces a word which satisfies $c_2$. Third, in the remaining function $c_3$, $\ldots$, $c_m$ are still valid 1-certificates because we have not made any assignments that 
contradict them. Some of these certificates $c_i$ may no longer be minimal. In this case, we can minimize them by removing unnecessary variables from $c_i$ and $S_i$.

We then perform a similar procedure for $c_i \in\{3, \ldots, m\}$. We assume that the variables in $S_2$, $\ldots$, $S_{i-1}$ have been assigned values. We then assign values 
to variables in $S_i$. If $c_i$ and $c_{i+1}$ contradict in the variable $j_{i,i+1}$, we assign it according to $c_{i+1}$. (If $i=m$, we define $i+1=1$.) If $c_i$ and $c_{i+1}$ no longer contradict (this can happen if $j_{i,i+1}$ was removed from one of them), we choose a variable in $S_{i}$ arbitrarily and assign it opposite to $c_i$. All other variables in $S_i$ are assigned according to $c_i$.

We now have similar properties as before. The remaining function cannot be always 1 and any 0-input that is consistent with our assignment is sensitive to changing a variable in $S_i$. Moreover, $c_{i+1}, \ldots, c_m$ are still valid 1-certificates and, if they are not minimal, they can be made minimal by removing variables.

At the end of this process, we have obtained $m-1$ sensitive bits on 0-inputs: for each of $c_2$, $\ldots$, $c_m$, there is a bit, changing which results in an input satisfying
$c_i$. We now argue that there should be one more sensitive bit. To find it, we consider the certificate $c_1$.

During the process described above, the position $j_{1,2}$ where $c_1$ and $c_2$ contradict was fixed opposite to the value assigned by $c_1$. The position $j_{m,1}$ where $c_1$ and $c_m$ contradict is either unfixed or fixed according to $c_1$. All other positions of $c_1$ are unfixed.

If there are no unfixed positions of $c_1$, then changing the position $j_{1,2}$ in a 0-input (that satisfies the partial assignment that we made) leads to a 1-input that satisfies $c_1$.
Hence, we have $m$ sensitive bits.

Otherwise, let $T \subset S_1$ be the set of positions in $c_1$ that have not been assigned and let $p=|T|$. W.l.o.g, we assume that $c_1$ assigns the value 1 to each of those positions. We examine the $2^{p}$ subfunctions obtainable by fixing the positions of $T$ in some way. Again we find the largest non-constant subfunction among these -- the subfunction that is not identically 1 with the highest number of bits fixed to 1. Then on 0-inputs we obtain a sensitive bit either at $j_{1,2}$ if this subfunction fixes all these positions to 1 or at a fixed 0 otherwise.

Similarly to the first three cases, we can minimize the remaining certificates and obtain a function with $k-m$ certificates satisfying the conditions of the lemma. 
By induction, the remaining function has a 0-sensitivity of $k-m$. Together with the $m$ additional sensitive bits we found, we obtain $s_0(f) \geq k$. \qed
\end{proof}

\section{Conclusions}

In this paper, we have shown a lower bound on 1-certificate complexity in relation to the ratio of 0-block sensitivity and 0-sensitivity:
\begin{equation}
\label{eq:final}
C_1(f) \geq \frac{3}{2} \frac{bs_0(f)}{s_0(f)} - \frac{1}{2}.
\end{equation}
This bound is tight, as the function constructed in Theorem \ref{thm:easy} achieves the following equality:
\begin{equation}
\label{eq:easy}
 C_1(f) =  \frac{3}{2} \frac{bs_0(f)}{s_0(f)} + \frac{1}{2}.
\end{equation}
The difference of $1$ appears as the proof of Theorem \ref{theorem:result} requires only a single $1$ in each certificate but the construction of Theorem \ref{thm:easy} has two.
 
Thus, we have completely solved the problem of finding the optimal relationship between 
$s_0(f)$, $bs_0(f)$ and $C_1(f)$.
For functions with $s_1(f)=C_1(f)$, such as those constructed in \cite{AS,R,V}, this means that 
\begin{equation}\label{equation:conjecture}
bs_0(f) \leq \left( \frac{2}{3} +o(1) \right) s_0(f) s_1(f).
\end{equation}
That is, if we use such functions, there is no better separation between $s(f)$ and $bs(f)$
than the currently known one.

For the general case, it is important to understand how big the gap between $s_1(f)$ and $C_1(f)$ can be. Currently, we only know that
\begin{equation}
\label{eq:as} s_1(f) \leq C_1(f) \leq 2^{s_0(f)-1} s_1(f),
\end{equation}
with the upper bound shown in \cite{A+}. In the general case  (\ref{eq:final})  together with this bound implies only
\begin{equation}
bs_0(f) \leq \left( \frac{2}{3} +o(1) \right) 2^{s_0(f)-1} s_0(f) s_1(f).
\end{equation}

However, there is no known $f$ that comes even close to saturating the
upper bound of (\ref{eq:as}) and we suspect that this bound can be significantly improved.

There are some examples of $f$ with gaps between $C_1(f)$ and $s_1(f)$, though. For example,
the 4-bit non-equality function of \cite{A} has $s_0(NE)=s_1(NE)=2$ and $C_1(NE)=3$ and 
it is easy to use it to produce an example $s_0(NE)=2$, $s_1(NE)=2k$ and $C_1(NE)=3k$.
Unfortunately, we have not been able to combine this function with the function
that achieves (\ref{eq:easy}) to obtain a bigger gap between $bs(f)$ and $s(f)$.

Because of that, we conjecture that (\ref{equation:conjecture}) might actually be
optimal. Proving or disproving this conjecture is a very challenging problem.

\bibliographystyle{abbrv}
\bibliography{bibliography}

\begin{thebibliography}{1}

\bibitem{A}
A.~Ambainis.
\newblock Polynomial degree vs. quantum query complexity.
\newblock {\em J. Comput. Syst. Sci.}, 72(2):220--238, 2006.

\bibitem{A+}
A.~Ambainis, M.~Bavarian, Y.~Gao, J.~Mao, X.~Sun, and S.~Zuo.
\newblock New decision tree complexity upper bounds in terms of sensitivity.
\newblock {\em ICALP'2014, to appear}.

\bibitem{AS}
A.~Ambainis and X.~Sun.
\newblock New separation between $s(f)$ and $bs(f)$.
\newblock {\em CoRR}, abs/1108.3494, 2011.

\bibitem{BW}
H.~Buhrman and R.~de~Wolf.
\newblock Complexity measures and decision tree complexity: a survey.
\newblock {\em Theor. Comput. Sci.}, 288(1):21--43, 2002.

\bibitem{HKP}
P.~Hatami, R.~Kulkarni, and D.~Pankratov.
\newblock {\em Variations on the Sensitivity Conjecture}.
\newblock Number~4 in Graduate Surveys. Theory of Computing Library, 2011.

\bibitem{KK}
C.~Kenyon and S.~Kutin.
\newblock Sensitivity, block sensitivity, and l-block sensitivity of {B}oolean
  functions.
\newblock {\em Inf. Comput.}, 189(1):43--53, 2004.

\bibitem{NS}
N.~Nisan and M.~Szegedy.
\newblock On the degree of {B}oolean functions as real polynomials.
\newblock {\em Computational Complexity}, 4:301--313, 1994.

\bibitem{R}
D.~Rubinstein.
\newblock Sensitivity vs. block sensitivity of {B}oolean functions.
\newblock {\em Combinatorica}, 15(2):297--299, 1995.

\bibitem{V}
M.~Virza.
\newblock Sensitivity versus block sensitivity of {B}oolean functions.
\newblock {\em Inf. Process. Lett.}, 111(9):433--435, 2011.

\end{thebibliography}

\end{document}